\documentclass[aps,pra,twocolumn,superscriptaddress]{revtex4}

\usepackage{graphicx,enumerate,verbatim,bbold}
\usepackage{amsmath,amssymb,amsthm,float,mathrsfs}
\usepackage[colorlinks]{hyperref}
\usepackage{cstyle}[2013/11/07]

\newcommand{\dani}[1]{{\color{black} #1}}
\newcommand{\colin}[1]{#1}

\makeatletter
\def\namedlabel#1#2{\begingroup
   \def\@currentlabel{#2}%
   \phantomsection\label{#1}\endgroup
}
\makeatother
\newcommand{\reals}{\mathbb{R}}

\newcommand{\kb}[2]{\mbox{$|#1 \rangle \langle #2 |$}}

\newcommand{\comm}[1]{}

\newcommand{\subfigref}[2][{}]{\hyperref[#2]{\ref{#2}#1}}

\begin{document}
\title{Analysis of a proposal for a realistic loophole-free Bell test with atom-light entanglement}
\author{C. Teo}
\affiliation{Centre for Quantum Technologies, National University of Singapore, Singapore}
\author{J. Min\'{a}\v{r}}
\affiliation{Centre for Quantum Technologies, National University of Singapore, Singapore}
\author{D. Cavalcanti}
\affiliation{Centre for Quantum Technologies, National University of Singapore, Singapore}
\affiliation{ICFO-Institut de Ciencies Fotoniques, 08860 Castelldefels, Barcelona, Spain}
\author{V. Scarani}
\affiliation{Centre for Quantum Technologies, National University of Singapore, Singapore}
\affiliation{Department of Physics, National University of Singapore, Singapore}

\begin{abstract}
\dani{The violation of Bell inequalities where both detection and locality loopholes are closed is crucial for device independent assessments of quantum information. While of technological nature, the simultaneous closing of both loopholes still remains a challenge. In Ref.~\cite{TeoAra13}, a realistic setup to produce an atom-photon entangled state that could reach a loophole free Bell inequality violation within current experimental technology \colin{was proposed}. Here we improve the analysis of this proposal by giving an analytical treatment that shows that}
\colin{the state proposed in Ref.~\cite{TeoAra13} can violate a Bell inequality for arbitrarily low photodectection efficiency, when all other losses are ignored. Moreover, it is also able to violate a Bell inequality considering only atomic and homodyne measurements eliminating the need to consider inefficient photocounting measurements. In this case, the maximum Clauser-Horne-Shimony-Holt (CHSH) inequality violation achievable is 2.29, and the minimum transmission required for violation is about 68\%. Finally, we show that by postselecting on an atomic measurement, one can engineer superpositions of coherent states for various coherent state amplitudes.}
\end{abstract}
\maketitle
\section{Introduction}

In recent years, it has been realized that nonlocal correlations can be certified without any additional assumptions on the internal mechanism of the devices used in the experiment. These nonlocal correlations can be obtained from entangled quantum systems by measuring them in appropriately chosen local observables. This is called a Bell test since the nonlocal nature of the measurement outcomes can be certified by the violation of certain constraints known as Bell inequalities \cite{Bell,Review}.

Once the existence of such correlations is established, they can be used in what is now referred to as \textit{device-independent} protocols \cite{diqkd,dirandom, Masanes11,distateest,rabelo2011device}.

Many Bell tests have been performed (see Sec. VII in \cite{Review}), most of them involving atomic and photonic systems (Bell tests in the realm of high energy physics have also been proposed and carried out, see for example ~\cite{KaonRef, KaonLPRef} and references therein). But, all experiments performed so far have suffered either from the detection loophole or the locality loophole \cite{aspect_experimental_1982,weihs_violation_1998,scheidl_violation_2010,matsukevich_bell_2008, rowe_experimental_2001, ansmann_violation_2009}. Needless to say, both the ultimate confirmation of the nonlocal character of nature and the implementations of device-independent quantum information protocols rely on loophole-free Bell tests.

The locality loophole has been closed in experiments with entangled propagating photons \cite{aspect_experimental_1982,weihs_violation_1998,scheidl_violation_2010}, while the detection loophole has been closed with stationary systems like atoms, ions, circuits \cite{matsukevich_bell_2008, rowe_experimental_2001, ansmann_violation_2009} and only very recently with photons, though not yet in separate locations \cite{GiuMec12,ChrMcC13}. It seems natural to seek for the best of both worlds by using a hybrid scenario. The main technological challenge is to have both fast and efficient detection at the same time. Typically, the detection of stationary information carriers (atoms, ions, \ldots) is very efficient, but relatively slow. This implies that the other information carrier (photon) must propagate a very large distance, and the losses add to the overall inefficiency of detection. On the other hand, it is still very challenging to have very efficient and yet still fast photon detection at the single photon level.

In Ref.~\cite{TeoAra13}, we \dani{and collaborators} proposed a Bell test which requires very low photon detection efficiency and a possible experimental implementation, feasible with available technology. It uses a single atom coupled to the field of a cavity in order to produce a specific entangled state between the atom and the light emitted by the cavity. The idea is to combine efficient detection schemes on the atomic side and the coherent nature of a light field to perform the detection on the photonic side \cite{cavalcanti10,quintino12}. This technique significantly reduces the required photon detection efficiency and is a promising candidate for a loophole-free experiment.

Here, we present a more detailed analysis of the measurement scenario of that proposal and show that 
the state proposed allows a violation of the CHSH inequality with only atomic and homodyne measurements, eliminating the need to perform inefficient photoncounting. The paper is structured as follows. In section \ref{sec:measurement_scenario}, we introduce the target state used in the Bell test. We show that for the measurement scenario of Ref.~\cite{TeoAra13}, we can achieve violation for arbitrarily low detection efficiency. Bell tests on the state using only atomic and homodyne measurements is also considered, and the possibility of inefficient atomic detection is discussed. We then show a possible route to producing entangled coherent states, and conclude in section \ref{sec:conclude}.

\section{Bell test scenario} \label{sec:measurement_scenario}

\dani{We will consider a typical Bell test involving two parts, A and B, each measuring two possible observables ($A_0, A_1$ and $B_0, B_1$ respectively) with dichotomic outcomes $\pm1$. In this scenario the only relevant Bell inequality is the 
\colin{CHSH} inequality given by
\begin{equation}
  \chsh = \mean{A_0 B_1} + \mean{A_1 B_1} + \mean{A_0 B_0} - \mean{A_1 B_0} \leq 2. \label{eqn:CHSH}
\end{equation}
The violation of this inequality implies that the measured statistics \colin{are} nonlocal. In the quantum case, these average values are formally given by the standard rule
\begin{equation}
 \mean{A_i B_j}=\Tr(A_i\otimes B_j \rho),
\end{equation}
where $\rho$ is a quantum state and $A_i$ and $B_j$ Hermitian operators with $\pm 1$ eigenvalues.

In the present paper we will deal with a scenario in which part A holds an atomic system and part B a photonic system. In what follows we will specify the state and measurements that will lead to the desired Bell violation of the CHSH inequality.}

\subsection{Quantum state}

In this article, we are interested in the \dani{following} state
\begin{equation}
  \ket{\psi} = \cos\nu \ket{s,0} + \sin\nu \ket{g,\alpha}, \label{eqn:target_state}
\end{equation}
where $\ket{s}$ and $\ket{g}$ are two energy levels of an atom, and $\ket{0}$ and $\ket{\alpha}$ are the vacuum and the coherent state of the electromagnetic field, with amplitude $\alpha$, respectively. This state is a hybrid entangled state of atom-field.

\subsection{Local measurements}\label{loc meas}

Consider a Bell test on the entangled system with the measurement operators
\begin{align}
  A_0 &= \cos \gamma \sig_z + \sin\gamma \sig_x  \label{eqn:A0},\\
  A_1 &= \cos \gamma \sig_z - \sin \gamma \sig_x \label{eqn:A1},
\end{align}
on the atomic system, where $\sig_x$ and $\sigma_z$ are the usual Pauli operators. \dani{For the photonic measurements,} as we would like to use the CHSH inequality, we will need a suitable dichotomization of the infinite photon Hilbert space. We choose the two operators,
\begin{align}
  B_0 &= 2\int^b_{-b} dx\, \proj{x} - \one, \quad {\rm and}\\
  B_1 &= 2\proj{0} -\one \label{eqn:B1_n}.
\end{align}
$B_0$ represents homodyne measurement of the $X$ quadrature, followed by a binning that outputs ``$+1$'' when the measurement result $\in [-b,b]$ and ``$-1$'' otherwise. $B_1$ represents photocounting, with the outcome ``$+1$'', when no photon is detected and ``$-1$'' otherwise.

\subsection{Bell violation: perfect case} \label{sec:opt_methods}
With the above constraints, we now show a semi-analytical optimization procedure which simplifies numerical computations of the maximum Bell violation.
The structure of the $A$ measurements allows one to optimize directly over the measurement angle $\gamma$. To see this, substitute Eqs. \eqref{eqn:A0} and \eqref{eqn:A1}, and rewrite $\chsh$ in \eqnref{eqn:CHSH} as
\begin{align}
  \chsh &= \mean{(A_0+A_1)B_1 + (A_0-A_1)B_0} \\
  &= 2\cos\gamma\mean{\sig_z B_1} + 2\sin \gamma \mean{\sig_x B_0}.
\end{align}
The maximum of $\chsh$ over $\gamma$ is thus
\begin{equation}
  \chsh_\gamma \dani{:=\max_\gamma \chsh}= 2 \sqrt{\big(\Tr(\rho \sig_x B_0)\big)^2 + \big(\Tr(\rho \sig_z B_1) \big)^2}, \label{eqn:chsh_firstopt}
\end{equation}
where the subscript denotes the variable which has been optimized over, and $\rho = \proj{\psi}$. Inserting \eqref{eqn:target_state} for $\ket{\psi}$ and defining
\begin{align}
  c_1 &= \half 1 \Tr\big((\kb{0}{\alpha}+\kb{\alpha}{0})B_0\big), \label{eqn:c1_perf}\\
  c_2 &= \Tr(\proj{0}B_1), \\
  c_3 &= \Tr(\proj{\alpha}B_1), \label{eqn:c3_perf}
\end{align}
yields
\begin{equation}
  \chsh_\gamma = 2\sqrt{(2 c_1\cos\nu\sin\nu)^2 + (c_2 \cos^2\nu-c_3\sin^2\nu)^2}. \label{eqn:chsh_nextopt}
\end{equation}
We now optimize \eqref{eqn:chsh_nextopt} over the state parameter $\nu$. \colin{We first rewrite the above equation as
\begin{align}
  \chsh_\gamma &= 2\sqrt{A\sin^4\nu +B \sin^2\nu + c_2^2}, \label{eqn:opt_r1} \\
  &= 2\sqrt{f}
\end{align}
where $A = (c_2+c_3)^2-4 c_1^2$ and $B= 4 c_1^2-2c_2(c_2+c_3)$. We plot the behaviour of the function $f$ for the two cases: $|c_2|>|c_3|$ and $|c_2|<|c_3|$ in \figref{fig:pic_proof}. This function has 3 extrema, is symmetric about $\sin\nu = 0$, and has $|c_2| \leq 1$ and $|c_3| \leq 1$ at $\sin\nu = 0$ and $\sin\nu=\pm 1$ respectively. From \figref{fig:pic_proof}, it becomes intuitively obvious that to have $\chsh > 2$, we need $f$ to have two maxima between $\sin\nu \in (-1,1)$. }
\begin{figure}[ht]
\centering
\includegraphics[width = \columnwidth]{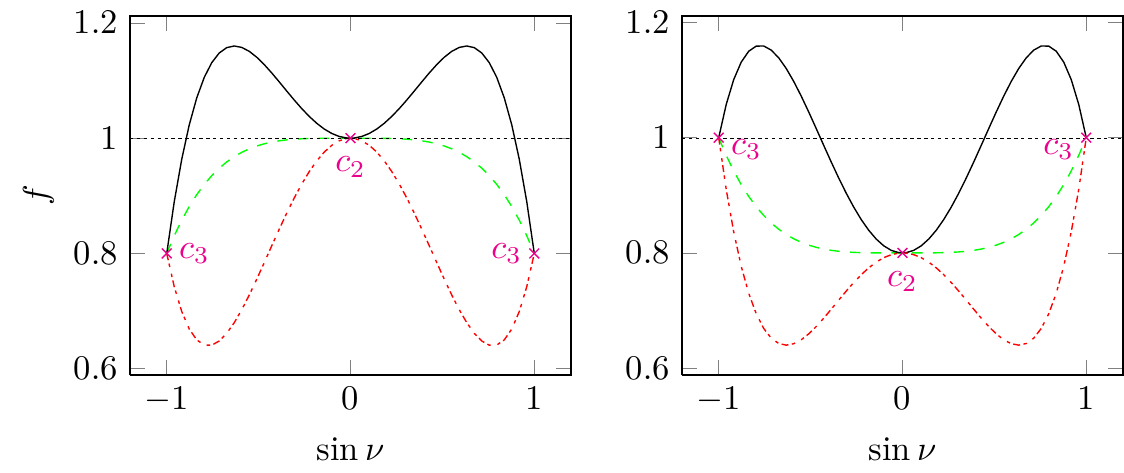}
\caption{(Color online) The above plots show the possible plots of the function $f$ given $f(\sin\nu = 0) = |c_2|$, and $f(\sin\nu = \pm 1) = |c_3| $. The left graph shows the case when $|c_2|>|c_3|$ and the right graph is when $|c_2|<|c_3|$. As is evident from these graphs, the only possible case for $f\geq1$ is when $f$ and consequently $\chsh_\gamma$ has 2 maxima in $\sin\nu\in[-1,1]$ (solid black curve). Note that in both graphs we have let the larger of $|c_2|$ and $|c_3|$ be 1. This is not necessarily the case, and depends on the specific measurements used. Thus, the condition of 2 maxima is a necessary but insufficient condition for $\chsh_\gamma >2$.}\label{fig:pic_proof}
\end{figure}
Simple differentiation shows that the extrema satisfy the condition
\begin{equation}
  (2A\sin^2\nu+B) \sin\nu = 0.
\end{equation}
Thus, the conditions for \eqref{eqn:opt_r1} to have two maxima are
\begin{align}
  0< &\frac{B}{-2A} < 1 \quad \textrm{and,}\\
  A &< 0.
\end{align}
Rewriting the above conditions in terms of $c_1,c_2$ and $c_3$ gives,
\begin{enumerate}
\centering
  \item[C1: \namedlabel{eqn:nec_con1}{C1}]   \qquad $c_3(c_2+c_3) < 2c_1^2$
  \item[C2: \namedlabel{eqn:nec_con2}{C2}] \qquad $c_2(c_2+c_3) < 2c_1^2$
\end{enumerate}
as the necessary conditions for $\chsh>2$. If either $c_2=1$ or $c_3=1$, and both conditions \eqref{eqn:nec_con1} and \eqref{eqn:nec_con2} are satisfied, we have immediately $\chsh>2$.

 If both conditions \eqref{eqn:nec_con1} and \eqref{eqn:nec_con2} are met, the maximum $\chsh$ achievable is given by
\begin{equation}
  \chsh_{\nu,\gamma} = 2\sqrt{\frac{[2c_1^2 - c_2(c_2+c_3)]^2}{4c_1^2-(c_2+c_3)^2} + c_2^2}.
\end{equation}

\subsection{Bell violation: imperfect case}

In the Bell test scenario we consider, we assume that the light field suffers from losses due to imperfect intensity transmission, $T_{\rm line}$. We also assume the atomic detection ($A_0$,$A_1$) and homodyne detection ($B_0$) have perfect detection efficiency, and that the photocounting ($B_1$) has efficiency $\eta$. Both the transmission and detection losses are modeled as beamsplitters with transmitivities $\sqrt{T_{\rm line}}$ and $\sqrt{\eta}$. In this case, the equations (\ref{eqn:c1_perf}-\ref{eqn:c3_perf}) become,
\begin{align}
  c_1 &= \half V\Tr\big((\kb{0}{\sqrt{T_{\rm line}}\alpha}+\kb{\sqrt{T_{\rm line}}\alpha}{0})B_0\big) \nn \\
  &= V \big( \frac{2}{\sqrt{\pi}} \int^b_{-b} dx \, e^{-x^2} \cos(\sqrt{2} x \sqrt{T_{\rm line}}|\alpha|) - e^{-\frac{T_{\rm line}|\alpha|^2}{2}}\big) \label{eqn:c1_imperf}
\end{align}
where $V = e^{-(1-T_{\rm line})\half{|\alpha|^2}}$, and
\begin{align}
  c_2 &= \Tr(\proj{0} B_1) = 1 \\
  c_3 &= \Tr(\proj{\sqrt{\eta T_{\rm line}} \alpha} B_1) = 2e^{-\eta T_{\rm line} \half{|\alpha|^2}} - 1. \label{eqn:c3_imperf}
\end{align}
The form of $c_1$ is derived using the convention that $\hat{x} = \frac{a+a\dagg}{\sqrt{2}}$. It should further be noted that we have used $\alpha$ completely imaginary in \eqnref{eqn:c1_imperf}. The intuition for this comes from the form of $B_0$, which is a measurement of the $X$ quadrature, and the form of $c_1$, which is the trace of $B_0$ and the ``off-diagonal'' terms $\kb{0}{\alpha} + \kb{\alpha}{0}$. Drawing the phase space distributions of $\ket{0}$ and $\ket{\alpha}$ would then show that using $i \alpha \in \reals$ must give the largest overlap of the projections of the distributions onto the $X$ quadrature. This also implies that the absolute phase of $\alpha$ is not important, but as long as the relative phases between the quadrature measurement and $\alpha$ is $\half \pi$, the $c_1$ term will be maximized.
  To satisfy conditions \eqref{eqn:nec_con1} and \eqref{eqn:nec_con2}, we need to maximize $c_1^2$. This can be done by looking for the maximum of \eqref{eqn:c1_imperf} over the binning parameter, $b$, which can be shown to occur when $b = \frac{\pi}{2\sqrt{2} |\alpha|}$. Further optimization over $\alpha$ is done numerically, and the results are shown in figure \ref{fig:ideal}. The maximum CHSH violation is 2.324, for $|\alpha| = 2.1$. For a line transmission of 52.2\%, and perfect detector efficiency, this scheme still achieves a violation of the CHSH inequality. We next show that it is always possible to achieve a violation for perfect line transmission, regardless of the detector efficiency.

\subsection{Bell violation with arbitrarily low photo-detection efficiency}
 \begin{figure}[ht]
	\centering
\includegraphics[width = \columnwidth]{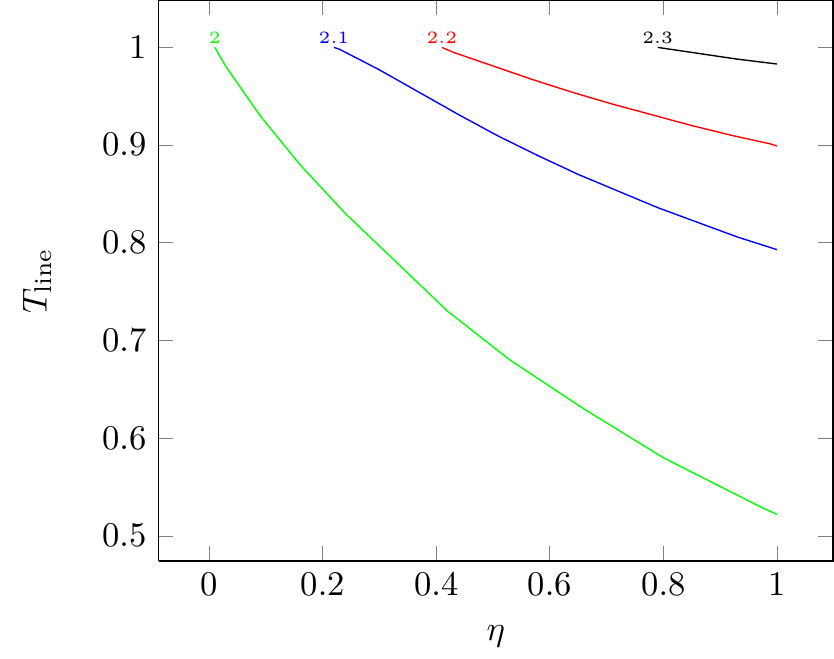}
\caption{(Color online) Contour plot of $\chsh$ vs the line transmission and the photocounting efficiency. The maximum CHSH violation is 2.324 for $|\alpha|=2.1$. Notice that these results are slightly better than Fig. 1 of Ref.~\cite{TeoAra13}.}\label{fig:ideal}
\end{figure}
\begin{thm} \label{thm:nice}
  If $T_{\rm line}=1$, we have
  \begin{equation}
    \chsh > 2,
  \end{equation}
  irrespective of the value of $\eta$.
\end{thm}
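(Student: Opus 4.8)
The plan is to exploit the structural collapse that occurs at $T_{\rm line}=1$. Here the prefactor $V=1$, so by Eqs.~\eqref{eqn:c1_imperf}--\eqref{eqn:c3_imperf} one has $c_2=1$ exactly, while $c_3 = 2e^{-\eta|\alpha|^2/2}-1$ lies strictly below $1$ and $c_2+c_3 = 2e^{-\eta|\alpha|^2/2}$. By the analysis preceding the theorem, once $c_2=1$ it suffices to verify the two necessary conditions \ref{eqn:nec_con1} and \ref{eqn:nec_con2} to guarantee $\chsh>2$. My first step is to note that these two conditions are not independent here: because $c_3<1$ and $1+c_3>0$, condition \ref{eqn:nec_con2} ($1+c_3<2c_1^2$) already forces \ref{eqn:nec_con1} ($c_3(1+c_3)<2c_1^2$), since $c_3(1+c_3)<1\cdot(1+c_3)<2c_1^2$. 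The whole problem therefore reduces to exhibiting, for every fixed $\eta>0$, a choice of the remaining free parameters $|\alpha|$ and $b$ for which
\begin{equation}
  c_1^2 > e^{-\eta|\alpha|^2/2}.
\end{equation}

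Second, I would fix the binning at its $c_1$-maximizing value $b=\tfrac{\pi}{2\sqrt{2}|\alpha|}$ and examine the large-$|\alpha|$ behaviour of $c_1$. The decisive qualitative point is that the right-hand side above decays like a Gaussian in $|\alpha|$, whereas $c_1$ decays only polynomially. Concretely, on the window $[-b,b]$ the phase $\sqrt{2}|\alpha|x$ ranges over $[-\tfrac{\pi}{2},\tfrac{\pi}{2}]$, so $\cos(\sqrt{2}|\alpha|x)\geq0$ and $e^{-x^2}\geq e^{-b^2}$; factoring out this lower bound and using $\int_{-b}^{b}\cos(\sqrt{2}|\alpha|x)\,dx=\sqrt{2}/|\alpha|$ gives
\begin{equation}
  c_1 \;\geq\; \frac{2\sqrt{2}}{\sqrt{\pi}\,|\alpha|}\,e^{-\pi^2/(8|\alpha|^2)} - e^{-|\alpha|^2/2}.
\end{equation}
As $|\alpha|\to\infty$ the exponential prefactor tends to $1$ and the subtracted term is negligible, so $c_1 \gtrsim \mathrm{const}/|\alpha|$, hence $c_1^2 \gtrsim \mathrm{const}/|\alpha|^2$.

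Third, I would combine the two estimates. Since $c_1^2$ is bounded below by a polynomially-decaying quantity while the target decays as $e^{-\eta|\alpha|^2/2}$, the product $c_1^2\,e^{\eta|\alpha|^2/2}$ diverges as $|\alpha|\to\infty$ for every $\eta>0$; thus $c_1^2>e^{-\eta|\alpha|^2/2}$ for all sufficiently large $|\alpha|$. This secures \ref{eqn:nec_con2}, and therefore \ref{eqn:nec_con1}, so with $c_2=1$ we conclude $\chsh>2$ independently of $\eta$. En route one should confirm that the denominator $4c_1^2-(c_2+c_3)^2$ in $\chsh_{\nu,\gamma}$ stays positive; this is automatic, since $(1+c_3)^2\leq 2(1+c_3)<4c_1^2$.

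The step I expect to be the main obstacle is the second one: rigorously establishing that $c_1$ decays only polynomially, rather than exponentially, in $|\alpha|$ at the optimal binning. This polynomial-versus-Gaussian competition is the entire mechanism behind the theorem, so the care must go into the lower bound on the truncated Gaussian--cosine integral. The remaining reductions --- that \ref{eqn:nec_con2} implies \ref{eqn:nec_con1}, that the denominator is positive, and the final limiting argument --- are routine once that estimate is in hand.
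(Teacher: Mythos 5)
Your proposal is correct and follows essentially the same route as the paper: reduce to condition \ref{eqn:nec_con2} using $c_2=1$ (noting \ref{eqn:nec_con2} implies \ref{eqn:nec_con1} since $c_3\leq c_2$), fix the binning $b=\pi/(2\sqrt{2}|\alpha|)$, and win the polynomial-versus-Gaussian decay competition by taking $|\alpha|$ large. The only difference is cosmetic but welcome: where the paper argues with approximations ($e^{-x^2}\approx 1$ on the window, $e^{-|\alpha|^2/2}\approx 0$), you replace them with genuine lower bounds ($\cos\geq 0$ and $e^{-x^2}\geq e^{-b^2}$ on $[-b,b]$), making the same argument fully rigorous.
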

\begin{proof}
  We first note that since $c_2=1$, we need only satisfy condition \eqref{eqn:nec_con2} to show that $\chsh >2$. This can been seen since
  \begin{align}
    c_3 &\leq 1 = c_2, \\
    \implies c_3 (c_2+c_3) &\leq c_2 (c_2+c_3).
  \end{align}
  Also, for $T_{\rm line} = 1$, we have
  \begin{align}
    c_1 &= \big( \frac{2}{\sqrt{\pi}} \int^b_{-b} dx \, e^{-x^2} \cos(\sqrt{2} x |\alpha|) - e^{-\frac{|\alpha|^2}{2}}\big),\quad {\rm and} \label{eqn:proof_c1}\\
    c_3 &= 2e^{-\eta \half{|\alpha|^2}} - 1.
  \end{align}
  Then, for $|\alpha| \to \infty$, we have
  \begin{align}
    \frac{2}{\sqrt{\pi}}\int^b_{-b} dx \, e^{-x^2} \cos(\sqrt{2} x |\alpha|) &\approx \frac{2}{\sqrt{\pi}}\int^b_{-b} dx \,\cos(\sqrt{2} x |\alpha|), \\
    &= \sqrt{\frac{2}{\pi}} \frac{2}{|\alpha|},
  \end{align}
  where we have used $b = \frac{\pi}{2\sqrt{2} |\alpha|} \to 0$ for $|\alpha| \to \infty$, and the approximation that in the limit of small $b$, the term $e^{-x^2}$ can be approximated to unity. The condition \eqref{eqn:nec_con2} then becomes
\begin{equation}
  e^{-\eta \frac{|\alpha|^2}{4}} < \sqrt{\frac{2}{\pi}} \frac{2}{|\alpha|},
\end{equation}
where we have made the approximation that for $|\alpha|\gg 1$, $e^{-\half{|\alpha|^2}} \approx 0$ in \eqnref{eqn:proof_c1}. Since the left hand side of the inequality is exponentially decreasing in $|\alpha|$, and the right hand side is decreasing as $\inv {|\alpha|}$, for any $\eta$, $\exists$ some $|\alpha| \gg 1$ such that condition \eqref{eqn:nec_con2} holds. Thus, since $c_2=1$, for $T_{\rm line} = 1$ and \emph{any} $\eta$, there exists some $|\alpha|, \gamma$ and $\nu$, which gives $\chsh>2$.
\end{proof}

\subsection{Scenario with two homodyne measurements} \label{sec:doubleH}

Theorem \ref{thm:nice} shows that for any $\eta$, there exists a $\ket{\psi}$ which can violate the CHSH inequality. Moreover, the setup we have been considering is a highly asymmetric measurement setup, where the light field is subject to either photodetection or homodyne measurement. It is natural to ask if we can perform the Bell test in a symmetric manner, using two homodyne measurements on the photonic side, eliminating the need to consider inefficient photodetectors. \dani{In what follows we will show that a Bell violation of $\chsh\approx2.29$ can be reached in this scenario. Moreover, the minimum transmission needed to be able to violate CHSH is $T_{\rm line}\approx 0.68$}.

\subsubsection{Assuming perfect atomic measurements}
In the case of two homodyne measurements and perfect atomic measurements, the optimization is even simpler. We first redefine the measurement operators on the photonic side as
\begin{align}
  B_0 &= 2\int^b_{-b} dx\, \proj{x} - \one, \\
  B_1 &= 2\int^{\frac{|\alpha|}{\sqrt{2}}}_{-\infty} dp\,\proj{p} - \one, \label{eqn:B1_2H}
\end{align}
and consider only $\alpha$ imaginary as in \secref{sec:opt_methods}. This form of the $B_1$ measurement operator and binning is arbitrary, and we do not claim that it is optimal. The intuition for using this form of the $B_1$ measurement comes from the form of \eqnref{eqn:B1_n}. Notice that this operator discriminates the state $\ket{0}$, and the $\ket{\alpha}$ state. We thus choose \eqnref{eqn:B1_2H} to measure the $P$ quadrature to discriminate the states $\ket{0}$ and $\ket{\alpha}$, and choose a binning based on this intuition.

This choice of the $\hat{B}$ operators is also particularly convenient, since in this case, $c_2=-c_3$ (refer to \secref{sec:opt_methods}), and the function after optimization over $\gamma$ is
\begin{equation}
  \chsh_\gamma = 2 \sqrt{c_1^2 \sin^2(2\nu) + c_2^2},
\end{equation}
which is trivially maximized by $\sin^2(2\nu) = 1$ and $b = \frac{\pi}{2\sqrt{2} |\alpha|}$. Thus
\begin{equation}
  \chsh_{\nu,\gamma} = 2 \sqrt{c_1^2 + c_2^2}, \label{eqn:chsh_2H}
\end{equation}
where $c_1$ is given in \eqnref{eqn:c1_imperf}, and
\begin{equation}
  c_2 = \erf (\sqrt{\half {T_{\rm line}}}|\alpha|). \label{eqn:c2_2H}
\end{equation}
Optimizing \eqnref{eqn:chsh_2H} numerically over $|\alpha|$, we can plot $\chsh_{\nu,\gamma,|\alpha|}$ vs $T_{\rm line}$. \figref{fig:chsh_2h} summarizes the result.
It shows that the largest violation attainable is 2.29, and the minimum required transmission which still yields a violation to be $T_{\rm line}=0.678$. This measurement protocol can be directly compared to the one presented in Ref. \cite{sangouard11}, and one can see that there is both an improvement on the maximum violation and the minimum transmission (2.26 and ~78\%). One important remark is that in this measurement scenario, we no longer need to produce the state \eqref{eqn:target_state} exactly, and a state of the form
\begin{equation}
 \ket{\psi} = \cos \nu \ket{s,0+\beta} +\sin \nu \ket{g,\alpha + \beta}
\end{equation}
for any complex $\beta$ is also a possible candidate, since this corresponds to an appropriate shift in the binnings.
\begin{figure}[!ht]
	\centering
\includegraphics[width = \columnwidth]{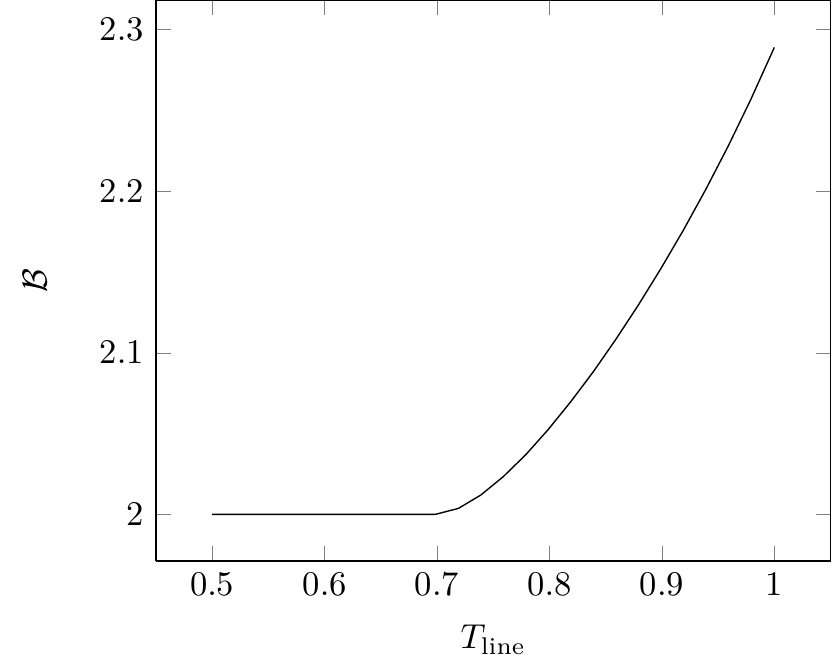}
\caption{$\chsh$ vs Transmission diagram for two homodyne measurements on the field. This figure shows that the maximum CHSH violation is 2.29, and the minimum transmission required for the violation of the CHSH inequality is $T_{\rm line}=0.678$. $\chsh$ is computed by numerically optimizing \eqnref{eqn:chsh_2H} over $|\alpha|$. }\label{fig:chsh_2h}
\end{figure}

\colin{We next investigate the behaviour of $|\alpha_{\rm opt}|$, the value of $|\alpha|$ which optimizes the CHSH violation as a function of the transmission $T_{\rm line}$. This is an important point, as the state production fidelity is usually inversely proportional to $|\alpha|$ (see for instance Ref.~\cite{TeoAra13}). As \figref{fig:2h_alpha_opt} shows, the typical range of the optimal $\alpha$ is $|\alpha_{\rm opt}|\in[2.2,3]$, which occurs for transmissions above 72\%.}
\begin{figure}[!ht]
	\centering
\includegraphics[width = \columnwidth]{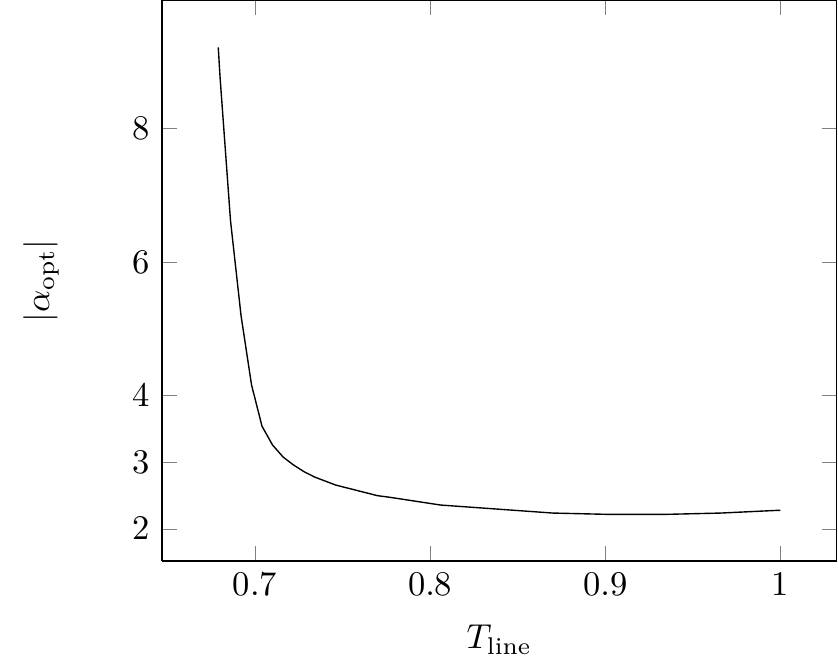}
\caption{$|\alpha_{\rm opt}|$ vs Transmission diagram. This diagram is the value of $|\alpha|$ which results from the numerical optimization of \eqnref{eqn:chsh_2H} over $|\alpha|$. The numerical optimization gives the maximum $|\alpha|$ as 9.22. However, the typical range of $|\alpha_{\rm opt}|$ is $\in [2.22,3]$ for $T_{\rm line}>0.72$   }\label{fig:2h_alpha_opt}
\end{figure}

The discussions above have assumed that atomic detection can be done with unit efficiency. However, the drawback of having high detection efficiencies in the atomic detection usually necessitates long detection times. Although schemes exist to implement fast and efficient atomic detection \cite{henkel2010highly}, such techniques might not always be available. In the next subsection, we treat the problem of inefficient atomic detection, and show that it quickly degrades the achievable CHSH violation.
\subsubsection{Inefficient atomic detection} \label{sec:ineffatoms}
To treat inefficient atomic detection, one assumes that with probability $\eta_{\rm a}$, everything proceeds as in the previous section, and with probability $1-\eta_{\rm a}$, the atomic measurement operators $A_0$ and $A_1$ have no effect, and can be modelled by the identity operation $\one$. Then, the CHSH quantity after optimization over the atomic measurement angle gives
\begin{equation}
  \chsh_\gamma = 2\Big( \eta_{\rm a} \sqrt{c_1^2 \sin^2(2\nu) + c_2^2} + (1-\eta_{\rm a}) c_2 \cos (2\nu) \Big),
\end{equation}
where $c_1$ and $c_2$ are once again given by Eqs. \eqref{eqn:c1_imperf} and \eqref{eqn:c2_2H}. The optimization over the angle $\nu$ can once again be done analytically, and the resulting expression numerically optimized over $\alpha$. \figref{fig:chsh_2h_aeta} summarizes the results of the final optimization.
\begin{figure}[ht]
	\centering
\includegraphics[width = \columnwidth]{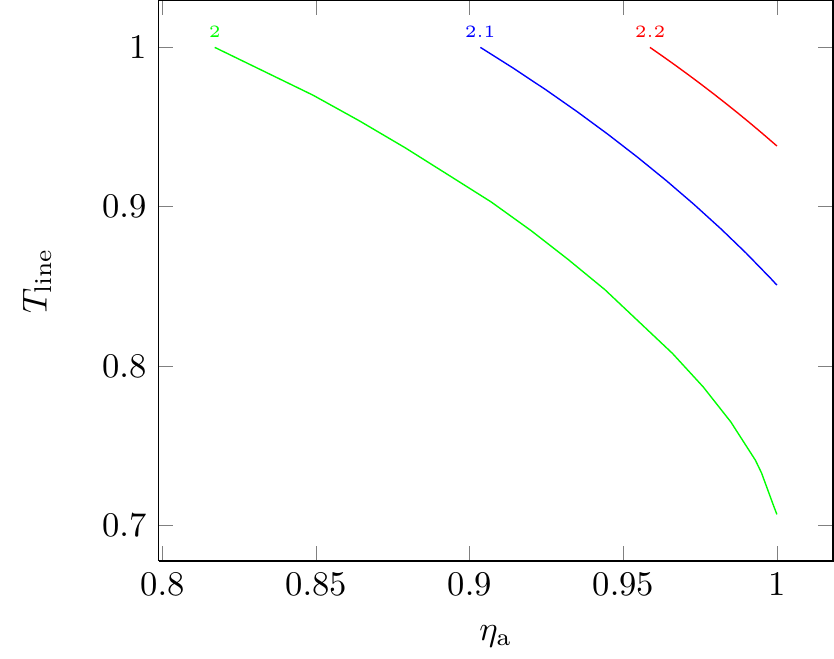}
\caption{(Color online) Contour plot of $\chsh$ vs the line transmission and the atomic detection efficiency for two homodyne measurements on the field. It shows that the minimum atomic detection efficiency is 0.817 for perfect line transmission.}\label{fig:chsh_2h_aeta}
\end{figure}
It shows that the atomic detection efficiency must at least be above $\eta_{\rm a} = 0.817$ to observe any CHSH violation, since it is the atomic detection efficiency for perfect line transmission. This result is not optimal, as a simple calculation would show that it does not reach the Eberhard bound, $\eta_{\rm a} T_{\rm line} = 2/3$ \cite{Ebe93}. In the next section, we show a possible scheme to circumvent inefficient atomic detection.

\subsection{Heralded preparation of entangled coherent states} \label{sec:atom_projection}

As a byproduct of our proposal, the setup proposed in Ref.~\cite{TeoAra13} to produce the state \eqref{eqn:target_state} can also be used to produce an entangled coherent states \cite{sanders_review}.
Consider the system prepared in the state \eqref{eqn:target_state}. If the atom is measured in the $\inv{\sqrt{2}}(\ket{s}\pm\ket{g})$ basis, and post-selected on the $\inv{\sqrt{2}}(\ket{s}+\ket{g})$ outcome, the resultant photonic state after a suitable displacement operation is,
\begin{equation}
  N(\alpha)(\cos\nu \ket{-\alpha} + \sin\nu \ket{\alpha}), \label{eqn:cat}
\end{equation}
where $N(\alpha)$ is a normalization factor, dependent on $\alpha$. This state is known as a coherent state superposition (or Schr\"{o}dinger's cat state) \cite{sanders_review}, and has been well studied in the literature, with much experimental progress in creating these states \cite{grangier_2009,sasaki_2008}. However, producing such states with values of $|{\alpha}|\geq 1.5$ has proven to be a big experimental hurdle. The state production protocol in Ref.~\cite{TeoAra13} thus provides an alternative route to achieving such states with a larger $|{\alpha}|$. One interesting thing about this state is that, by sending the state \eqref{eqn:cat} to a 50/50 beamsplitter, one can create entangled coherent states of the form
\begin{equation}
    N(\alpha)(\cos\nu \ket{-{\frac{\alpha}{\sqrt{2}}},-{\frac{\alpha}{\sqrt{2}}}} + \sin\nu \ket{{\frac{\alpha}{\sqrt{2}}},{\frac{\alpha}{\sqrt{2}}}}),
\end{equation}
which is reminiscent of the Bell states with polarization entanglement \cite{EnkHir01}. It is also worth noting, that such a route to producing a coherent state superposition with the help of atoms is not new, and a similar proposal has been previously studied in Ref.~\cite{duan_2005}.

\subsubsection{Splitting the cat}
Notice that the above operation of splitting the state \eqref{eqn:cat} using a single beamsplitter can, in principle, be repeated \emph{ad infinitum}, leading one to envision the case of $n$ beamsplitters as in \figref{fig:split_cats}.
\begin{figure}[ht!]
\centering
\includegraphics[width =  0.8\columnwidth]{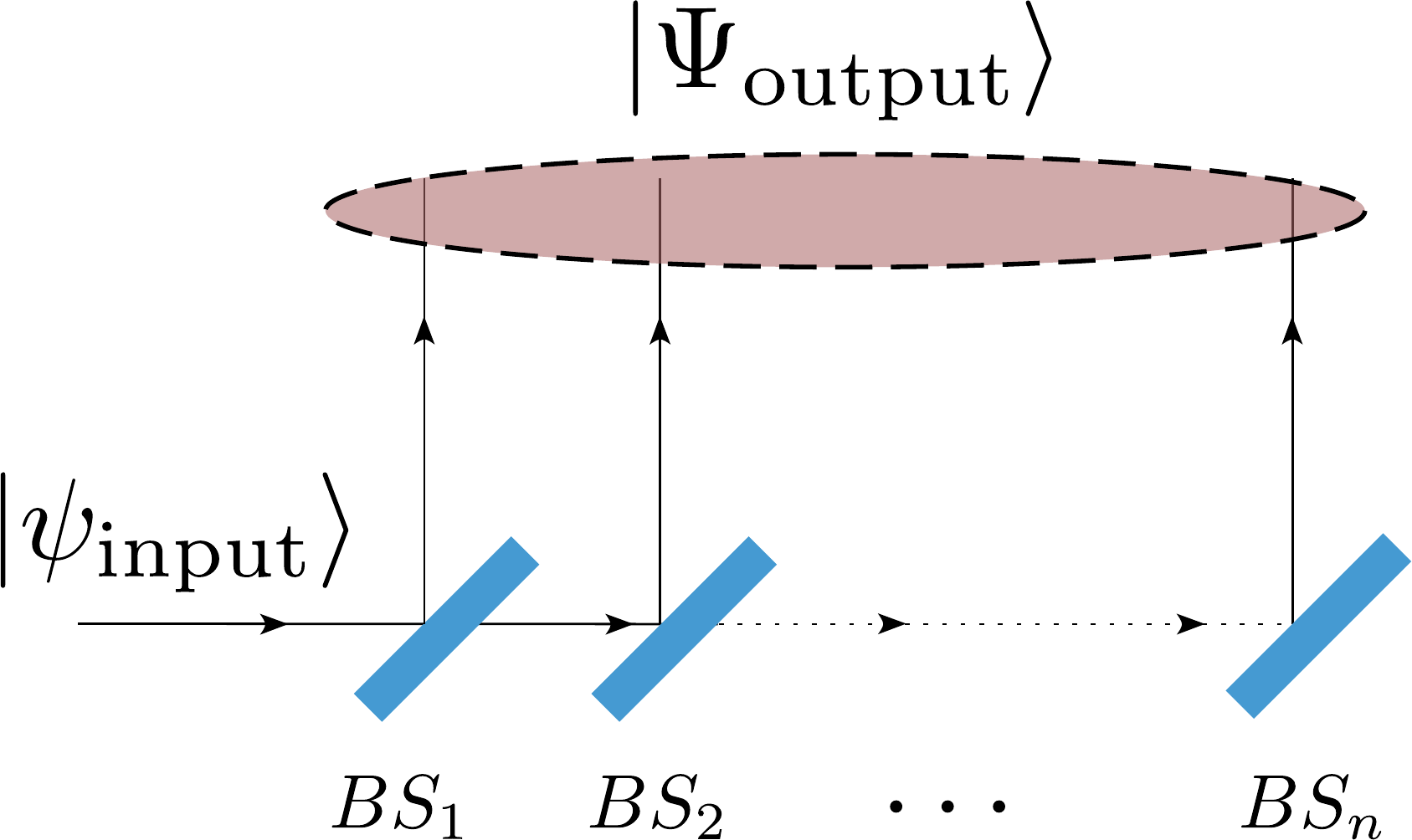}
\caption{(Color online) The above setup of $n$ beamsplitters creates a path-entangled, $n$-partite state $\ket{\Psi_{\rm output}}$. For input state $\ket{\psi_{\rm input}}$ given by \eqnref{eqn:cat}, the output state will be given by \eqnref{eqn:multi_cat}.} \label{fig:split_cats}
\end{figure}
Denoting the transmittivity and reflectivity of the $i^{th}$ beamsplitter as $t_i$ and $r_i$, with $t_i^2 + r_i^2 =1 $, we write the state $\ket{\Psi_{\rm output}}$ produced at the output given the input state \eqref{eqn:cat}, as
\begin{align}
  \ket{\Psi_{\rm output}} &= N(\alpha)(\cos\nu \ket{-f_1 \alpha,-f_{2} \alpha, \ldots\,-f_{N} \alpha } {} \nn \\
  &{} + \sin\nu \ket{f_1 \alpha,f_{2} \alpha, \ldots\,f_{N} \alpha }), \label{eqn:multi_cat}
\end{align}
where we have used the shorthand $f_k = r_k \Pi^{k-1}_{i=1} t_i$. Some simple algebra then shows that to create a state with equal amplitudes in each mode requires that the transmittivity of the $k^{th}$ and $k-1 ^{th}$ beamsplitters satisfy $t^2_k = 2- \inv{t^2_{k-1}}$, thus producing a state of the form
\begin{equation}
  \ket{\Psi} = N(\alpha) (\cos\nu \ket{-\ti \alpha,-\ti \alpha\ldots\,-\ti\alpha}+\sin\nu\ket{\ti \alpha,\ti \alpha\ldots\,\ti\alpha}).
\end{equation}

\section{Conclusion} \label{sec:conclude}


\colin{In this work, we have presented an improved analysis of the proposal to perform a loophole free Bell test in Ref.~\cite{TeoAra13}. We showed that the hybrid atom-photon entangled state and the set of measurements considered in Ref. \cite{TeoAra13} allows a violation of the CHSH inequality for \emph{vanishing} photocounting efficiency with perfect detection on all other measurements.

We also showed that we can perform the Bell test using only atomic and homodyne measurements, eliminating the need to consider inefficient photocounting, and showed CHSH violations down to a transmission of 67.8\%. Finally, we showed that the proposed system can be used as a state preparation device to produce a superposition of coherent states, useful for quantum information processing using continuous variables. }

\section*{Acknowledgements}
This work is funded by the National Research Foundation and the Ministry of Education of Singapore. We acknowledge discussions with Melvyn Ho, Mateus Ara\'ujo, Marco T\'ulio Quintino, Marcelo Terra Cunha and Marcelo Fran\c ca Santos.

\end{document}